\pgfplotsset{compat=1.17}
\pgfplotsset{table/search path={results/},}
\definecolor{wongblue}{RGB}{0, 114, 178}
\definecolor{wongorange}{RGB}{230, 159, 0}
\definecolor{wonggreen}{RGB}{0, 158, 115}
\definecolor{wongpurple}{RGB}{204, 121, 167}
\definecolor{wonglightblue}{RGB}{86, 180, 233}
\definecolor{wongvermillion}{RGB}{213, 94, 0}
\definecolor{wongyellow}{RGB}{240, 228, 66}
\definecolor{matlabblue}{rgb}{     0,     0.447, 0.741}
\definecolor{matlaborange}{rgb}{   0.85,  0.325, 0.098}
\definecolor{matlabyellow}{rgb}{   0.929, 0.694, 0.125}
\definecolor{matlabpurple}{rgb}{   0.494, 0.184, 0.556}
\definecolor{matlabgreen}{rgb}{    0.466, 0.674, 0.188}
\definecolor{matlablightblue}{rgb}{0.301, 0.745, 0.933}
\definecolor{matlabred}{rgb}{      0.635, 0.078, 0.184}
\tikzset{%
  block/.style     = {draw,rectangle,align=center,inner sep=2mm},
  bigblock/.style  = {draw,rectangle,align=center,inner sep=2mm,minimum height=2.5em},
  hiergroup/.style = {draw,line width=0.3pt,inner sep=5mm,rectangle,rounded corners},
  dspxor/.style    = {shape=dspshapexor,line cap=rect,line join=rect,line width=\dspblocklinewidth,minimum size=\dspoperatordiameter},
}
\theoremstyle{plain}
\newtheorem{thm}{Theorem}
\newtheorem{lem}{Lemma}
\theoremstyle{definition}
\theoremstyle{remark}
\renewcommand\Pr{\operatorname{\mathcal P}}
\renewcommand*{\vec}[1]{\bm{#1}}
\newcommand*{\matr}[1]{\bm{#1}}
\newcommand{\bmat}[1]{\ensuremath{\begin{bmatrix}#1\end{bmatrix}}}
\DeclareMathOperator*{\argmax}{arg\,max}
\DeclareMathOperator{\E}{\mathbb E}
\DeclareMathOperator{\He}{\mathbb H}
\DeclareMathOperator{\MI}{\mathbb I}
\DeclarePairedDelimiter\abs{\lvert}{\rvert}
\DeclarePairedDelimiter\idxset{\llbracket}{\rrbracket}
\DeclarePairedDelimiter\set{\{}{\}}
\DeclarePairedDelimiterX{\infdivx}[2]{(}{)}{#1\;\delimsize\|\;#2}
\DeclareSIUnit{\belc}{Bc}
\DeclareSIUnit{\belm}{Bm}
\DeclareSIUnit{\bit}{bit}
\DeclareSIUnit{\sample}{S}
\DeclareSIUnit{\bpcu}{bpcu}
\newacronym{ask}{ASK}{amplitude-shift keying}
\newacronym{awgn}{AWGN}{additive white Gaussian noise}
\newacronym{ber}{BER}{bit error rate}
\newacronym{bicm}{BICM}{bit-interleaved coded modulation}
\newacronym{bidmc}{biDMC}{binary-input discrete memoryless channel}
\newacronym{bpcu}{bpcu}{bits per channel use}
\newacronym{bpsk}{BPSK}{binary phase-shift keying}
\newacronym{ccdm}{CCDM}{constant composition distribution matching}
\newacronym{crc}{CRC}{cyclic redundancy check}
\newacronym{dimc}{diMC}{discrete-input memoryless channel}
\newacronym{dmc}{DMC}{discrete memoryless channel}
\newacronym{dm}{DM}{distribution matching}
\newacronym{dpc}{DPC}{dirty paper coding}
\newacronym{dsnr}{dSNR}{design signal-to-noise ratio}
\newacronym{fec}{FEC}{forward error control}
\newacronym{fer}{FER}{frame error rate}
\newacronym{hy}{HY}{Honda-Yamamoto}
\newacronym{iff}{iff}{if and only if}
\newacronym{iid}{i.i.d.}{independent and identically distributed}
\newacronym{im}{IM}{intensity modulation}
\newacronym{ldpc}{LDPC}{low-density parity-check}
\newacronym{llps}{LLPS}{linear layered probabilistic shaping}
\newacronym{llr}{LLR}{log-likelihood ratio}
\newacronym{mac}{MAC}{multiple access channel}
\newacronym{mc}{MC}{Monte Carlo}
\newacronym{mi}{MI}{mutual information}
\newacronym{mlc}{MLC}{multilevel coding}
\newacronym{mlhy}{MLHY}{multilevel Honda-Yamamoto}
\newacronym{mlpc}{MLPC}{multilevel polar coding}
\newacronym{mmse}{MMSE}{minimum mean square error}
\newacronym{msd}{MSD}{multistage decoding}
\newacronym{ook}{OOK}{on-off keying}
\newacronym{pam}{PAM}{pulse-amplitude modulation}
\newacronym{pas}{PAS}{probabilistic amplitude shaping}
\newacronym{ps}{PS}{probabilistic shaping}
\newacronym{pcpas}{PC-PAS}{polar-coded probabilistic amplitude shaping}
\newacronym{qam}{QAM}{quadrature-amplitude modulation}
\newacronym{rcub}{RCUB}{random coding union bound}
\newacronym{scl}{SCL}{successive cancellation list}
\newacronym{sc}{SC}{successive cancellation}
\newacronym{se}{SE}{spectral efficiency}
\newacronym{sir}{SIR}{signal-to-interference ratio}
\newacronym{smi}{SMI}{symmetric mutual information}
\newacronym{snr}{SNR}{signal-to-noise ratio}
\newacronym{wlog}{w.l.o.g.}{without loss of generality}
\newcommand{\ie}{i.e.}
\newcommand{\eg}{e.g.}
\newcommand*{\Huuly}{\He(U^\ell_i|\vec V^\ell_i, \vec Y)}
\newcommand*{\Huul}{\He(U^\ell_i|\vec V^\ell_i)}
\newcommand*{\bbN}{{\idxset{N}}}
\newcommand*{\bbm}{{\idxset{m}}}
\newcommand*{\bbi}{{\idxset{i-1}}}
\newcommand*{\bbl}{{\idxset{\ell-1}}}
\newcommand*{\nbp}{{2^{-N^{\beta'}}}}
\begin{document}

\title{Multilevel Binary Polar-Coded Modulation Achieving the Capacity of Asymmetric Channels
}

\tikzexternaldisable
\author{\IEEEauthorblockN{Constantin Runge\,\orcidlink{0000-0001-8324-3945},
Thomas Wiegart\,\orcidlink{0000-0002-8498-6035},
Diego Lentner\,\orcidlink{0000-0001-6551-8925},
Tobias Prinz\,\orcidlink{0000-0002-9216-8075}}
\IEEEauthorblockA{Institute for Communications Engineering,
Technical University of Munich,
80333 Munich, Germany \\
\{constantin.runge, thomas.wiegart, diego.lentner, tobias.prinz\}@tum.de}
}

\maketitle
\tikzexternalenable

\begin{abstract}
A multilevel coded modulation scheme is studied that uses solely binary polar codes and Honda-Yamamoto probabilistic shaping. The scheme is shown to achieve the capacity of discrete memoryless channels with input alphabets of cardinality a power of two. The performance of finite-length implementations is compared to polar-coded probabilistic amplitude shaping and constant composition distribution matching.
\end{abstract}

\begin{IEEEkeywords}
coded modulation, polar codes, asymmetric channels, probabilistic shaping
\end{IEEEkeywords}

\section{Introduction}

Reliable and power-efficient communication usually requires  \gls{ps} and/or geometric shaping .%
There are several ways to implement \gls{ps}, e.g., many-to-one mappings \cite[Sec. 6.2]{Gallager68}, trellis shaping \cite{forney92}, and others, see \cite[Sec.~II]{Bocherer15},
\cite{Gultekin20}. More recent schemes are \gls{pas} \cite{Bocherer15},
and \gls{hy} \gls{ps} \cite{Honda13} based on polar codes \cite{Stolte02, Arikan09}.

\Gls{pas} received significant attention from the optical fiber communications community and industry due to its performance and flexibility~\cite{Buchali:16,Boecherer:19}. \Gls{pas} requires a target distribution $P_X$ that factors as $P_X = P_A \cdot P_S$ so that $P_S$ is a uniform binary distribution. Usually ``$A$'' and ``$S$'' refer to the amplitude and sign of $X$, respectively, but more general choices are permitted. We focus on $P_S(-1)=P_S(1)=1/2$. An important component of \gls{pas} is a \emph{\gls{dm}} device that maps uniformly distributed bits to real-alphabet symbols with distribution $P_A$, e.g., a \gls{ccdm} device \cite{Schulte16}. These symbols are then protected with the parity bits of a systematic \gls{fec} code. Each parity bit is uniformly distributed and chooses one of two signs $S$ so that $X=A\cdot S$ has $P_X(x)=P_X(-x)$. \Gls{pas} in general does not allow for asymmetric $P_X$.

The \gls{hy} scheme generates asymmetric $P_X$ by performing joint \gls{dm} and \gls{fec}. The scheme achieves the capacity of general \glspl{bidmc} \cite{Honda13} and has excellent performance for short block lengths. For instance, see \cite{Wiegart19} that compares the performance of different schemes for \gls{ook} modulation over \gls{awgn} channels. 
An earlier scheme by Sutter et al. \cite{Sutter-ITW12} also achieves the capacity of \glspl{bidmc}. 
This scheme concatenates two separate polar codes for \gls{fec} and \gls{dm} which reduces the error exponent by a factor of two as compared to \gls{hy} coding \cite{Honda13}.

Polar codes can be extended to higher-order modulation by using \gls{mlc} \cite{Seidl13}.
In this paper, we study a \gls{mlhy} coding scheme which is amenable to practical implementation. %
Our contributions are two-fold.
First, we prove that \gls{mlhy} coding achieves the capacity of general \glspl{dmc} with $M=2^m$-ary channel inputs.
Second, we compare the \gls{dm} performance and shaping gains of \gls{pas} and \gls{mlhy} coding for short block lengths.
We evaluate the performance with unipolar ($X \geq 0$) and bipolar ($X\in\mathbb R$) modulation over \gls{awgn} channels. The proposed scheme performs on-par with polar-coded \gls{pas} \cite{Prinz17} and does not need a \gls{dm} device.

We remark that several polar coding architectures, including multilevel ones, were studied in \cite{Bocherer17,iscan18comml,iscan19access,icscan2020sign,Matsumine19,Bohnke20,Sener21} but these papers do not consider capacity proofs.
Using polar lattice codes, a capacity proof for channels whose input alphabets have a lattice structure is given in \cite{Liu19}.
Our proof and the proof in \cite{Liu19} are both based on the idea that each bitlevel polarizes, but we note that \gls{mlhy} coding is not restricted to lattice inputs. This makes our proof simpler and more general.
Note that the scheme of \cite{Liu19} is effectively a special case of the \gls{mlhy} coding studied here for the case of \gls{ask} modulation, Gaussian $P_X$ and a set-partitioning bit-mapping \cite{Ungerbock82}.

Instead of using \gls{mlhy} coding, the capacity of \glspl{dmc} can also be achieved by combining \gls{hy} coding with non-binary kernels \cite{Sasoglu09,Park13,Honda13}.
However, binary polar codes are preferred in practice because non-binary polar codes and decoders are complex to implement and design \cite{Gulcu18}, \cite{Yuan18}.

Polar-coded modulation has also been studied in the context of multiple-access channels \cite[Sec.~V]{Abbe12}
where each bitlevel of a channel input symbol corresponds to one user. 
Based on this approach, the authors of \cite{Abbe12} describe a \gls{mlc} scheme that achieves the symmetric capacity of \glspl{dmc} with $M=2^m$-ary channel inputs using independent binary polar codes for each bitlevel.
For transmission over the \gls{awgn} channel, they combine the scheme with a many-to-one mapping, which is not desirable in practice.

This paper is organized as follows.
Section~\ref{sec:pre} gives an overview of polar coding concepts.
Section~\ref{sec:dimc} shows that the \gls{mlhy} scheme achieves capacity and develops error exponents.
Finally, Section~\ref{sec:shaping} treats polar codes for short block lengths and provides numerical results.

\section{Preliminaries}
\label{sec:pre}
\subsection{Notation}

Random variables are written with upper case letters such as $X$. Their alphabet, distribution, and realizations are written as $\mathcal{X}$, $P_X$, and $x$, respectively. Vectors are denoted by bold symbols such as $\vec x$.
$\mathcal X^C$ and $\abs{\mathcal X}$ are the complement and cardinality of $\mathcal X$, respectively.
A set difference is denoted as $\mathcal X \setminus \mathcal Y = \mathcal X \cap \mathcal Y^C$.
An index set from $1$ to $N$ is denoted as $\bbN \triangleq \set{1,\dots,N}$.
A set $\mathcal S$ may select entries of a vector, creating a substring $\vec x_{\mathcal S}$ with length $\abs{\mathcal S}$, \eg, $\vec x_\bbN$. An event $\mathcal E$ has probability $\Pr(\mathcal E)$.

We denote by $\He(X)$, $\He(X|Y)$, and $\MI(X; Y)$ the entropy of $X$, the entropy of $X$ conditioned on $Y$ and the \gls{mi} of $X$ and $Y$, respectively.
The conditional Bhattacharyya parameter \cite{Arikan10} is defined as $Z(X|Y) = 2\E\mleft[\sqrt{P_{X|Y}(0|Y)P_{X|Y}(1|Y)}\mright]$ and satisfies \cite{Arikan10} \cite[Lem.~6]{Liu19}
\begin{gather}
    Z(X|Y)^2 \leq \He(X|Y) \leq Z(X|Y) \\
    Z(X|Y,S) \leq Z(X|Y) \label{eq:bhattacharyya_conditional}
\end{gather}
where $X,Y,S \sim P_{X,Y,S}$.

\subsection{Polarization and Polar Coding}

Polar codes \cite{Stolte02}, \cite{Arikan09} are linear block codes of length $N=2^n$ for $n\in\mathbb N$.
They are defined via the polar transform that maps a vector $\vec u\in\mathbb F_2^N$ to a codeword $\vec x\in\mathbb F_2^N$ 
\begin{equation}
    \vec x = \vec u \matr G_N \textnormal{ with } \matr G_N = \matr B_N \bmat{1 & 0\\ 1 & 1}^{\otimes n}
\end{equation}
where $\matr B_N$ is the bit-reversal matrix as in \cite{Arikan09}, and where $\matr F^{\otimes n}$ is the $n$-fold Kronecker product of $\matr F$.
The polar transform satisfies $\matr G_N^{-1} = \matr G_N$.
For encoding, we will consider the codeword $\vec x$ to have \gls{iid} entries.
The codeword $\vec x$ is transmitted over $N$ uses of a \gls{bidmc} $W\colon X \to Y$ resulting in a vector of channel observations $\vec y\in\mathcal Y^N$.
Consider the sets
\begin{align}
    \mathcal L_{U|Y} &= \set{i\in\bbN: Z(U_i|\vec U_\bbi, \vec Y) < \delta_N}     \textnormal{,} \\
    \mathcal H_{U|Y} &= \set{i\in\bbN: Z(U_i|\vec U_\bbi, \vec Y) > 1 - \delta_N}
\end{align}
with  $\delta_N \triangleq 2^{-N^\beta}$ for any $\beta<\frac12$.
It is known \cite[Eqs.~(38), (39)]{Honda13} that these index sets polarize, \ie, we have
\begin{align}
    \lim_{N\to\infty} \frac1N\abs*{\mathcal L_{U|Y}} &= 1 - \He(X|Y) \label{eq:asym_polarization1} \\
    \lim_{N\to\infty} \frac1N\abs*{\mathcal H_{U|Y}} &= \He(X|Y)     \textnormal{.} \label{eq:asym_polarization2}
\end{align}

The encoder places the data bits on the reliable bit positions of $\vec u$, \ie, those with $Z(U_i|\vec U_\bbi, \vec Y)\approx0$.
The remaining positions in $\vec u$ are frozen, \ie, set to fixed values.
The receiver uses \gls{sc} decoding of the non-frozen bits via $\hat u_i = \argmax_{u} P_{U_i|\vec U_\bbi,\vec Y}(u|\vec{\hat u}_\bbi, \vec y)$. %

Honda and Yamamoto \cite{Honda13} consider two more sets:
\begin{align}
    \mathcal L_{U} &= \set{i\in\bbN: Z(U_i|\vec U_\bbi) < \delta_N}    \\
    \mathcal H_{U} &= \set{i\in\bbN: Z(U_i|\vec U_\bbi) > 1 - \delta_N} . \label{eq:mathcal-HU}
\end{align}
Note that \eqref{eq:asym_polarization2} and \eqref{eq:mathcal-HU} yield $\abs*{\mathcal H_U}/N \overset{\footnotesize N\to\infty}{\to} \He(X)$.
With \eqref{eq:bhattacharyya_conditional}, we have $\mathcal L_U \subseteq \mathcal L_{U|Y}$ and thus $\mathcal L_U \cap \mathcal H_{U|Y} = \emptyset$.
For the ``data'' set  $\mathcal I = \mathcal H_U \cap \mathcal L_{U|Y}$, we thus have \cite[Thm.~1]{Honda13}
\begin{equation}
    \lim_{N\to\infty} \frac1N\abs*{\mathcal I} = \MI(X; Y).
\end{equation}
To achieve capacity, Honda and Yamamoto chose the $\mathcal I$ bits as data bits and the remaining bits in $\vec u$ randomly with probability $P_{U_i|\vec U_\bbi}(\cdot|\vec u_\bbi)$.
To calculate these probabilities, the same \gls{sc} structure as for decoding is employed. The random bits must be known to the receiver. We describe the method in more detail in Section~\ref{sec:enc_dec}.

\subsection{Conditional Polarization}

We next consider conditional polarization which helps to prove our main results.

\begin{lem}
    \label{lem:polarize_conditional_mi}
    Let $X\in\mathbb F_2$ and $S$ be the input to a \gls{bidmc} $W\colon X \to Y$ with side information $S$ with joint distribution $(X,Y,S) \sim P_{Y|X,S}P_{X|S}P_S$.
    Let $(\vec X, \vec S, \vec Y)$ be $N$ \gls{iid} realizations of $(X, S, Y)$ and let $\vec U = \vec X \matr G_N$.
    Then all the index sets
    \begin{align}
        \mathcal L_{U|S}   &= \set{i \in \bbN: Z(U_i|\vec U_\bbi, \vec S) < \delta_N} \label{eq:lus} \\
        \mathcal H_{U|S}   &= \set{i \in \bbN: Z(U_i|\vec U_\bbi, \vec S) > 1 - \delta_N}  \label{eq:hus} \\
        \mathcal L_{U|S,Y} &= \set{i \in \bbN: Z(U_i|\vec U_\bbi, \vec S, \vec Y) < \delta_N} \label{eq:lusy} \\
        \mathcal H_{U|S,Y} &= \set{i \in \bbN: Z(U_i|\vec U_\bbi, \vec S, \vec Y) > 1 - \delta_N} \label{eq:husy}
    \end{align}
    polarize with $\delta_N \triangleq 2^{-N^\beta}$ for any $\beta<\frac12$, yielding
    \begin{align}
        \lim_{N\to\infty} \frac1N\abs{\mathcal L_{U|S} \cup \mathcal H_{U|S,Y}} &= 1 - \MI(X; Y | S) \label{eq:conditional_polarization1} \\
        \lim_{N\to\infty} \frac1N\abs{\mathcal H_{U|S} \cap \mathcal L_{U|S,Y}} &= \MI(X; Y | S) \textnormal{.} \label{eq:conditional_polarization2}
    \end{align}
\end{lem}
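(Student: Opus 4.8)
The plan is to reduce the claim to the single-level polarization result \eqref{eq:asym_polarization1}--\eqref{eq:asym_polarization2} by absorbing the common side information $\vec S$ into the conditioning. Because $S$ is available at both the encoder and the decoder, it plays exactly the role of an extra channel output that is known a priori: the synthetic quantities $Z(U_i|\vec U_\bbi,\vec S)$ and $Z(U_i|\vec U_\bbi,\vec S,\vec Y)$ are the conditional Bhattacharyya parameters of the polar transform applied to the effective channels $X\to S$ and $X\to(S,Y)$, respectively. Each of these is a $[0,1]$-valued process evolving under the \emph{same} single-step polar recursion that underlies \eqref{eq:asym_polarization1}--\eqref{eq:asym_polarization2}; enlarging the output alphabet by $S$ does not alter the recursion. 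Hence the fast-polarization argument of Honda and Yamamoto applies verbatim, and all four sets \eqref{eq:lus}--\eqref{eq:husy} polarize with $\delta_N=2^{-N^\beta}$ for any $\beta<\tfrac12$.

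With individual polarization in hand, I would fix the two limiting fractions. Invertibility of $\matr G_N$ together with the \gls{iid} structure gives $\tfrac1N\sum_i \He(U_i|\vec U_\bbi,\vec S)=\tfrac1N\He(\vec U|\vec S)=\tfrac1N\He(\vec X|\vec S)=\He(X|S)$, and likewise $\tfrac1N\sum_i\He(U_i|\vec U_\bbi,\vec S,\vec Y)=\He(X|S,Y)$. Combining these averages with the sandwich $Z^2\le\He\le Z$ and the polarization just established yields
\begin{align}
    \tfrac1N\abs{\mathcal L_{U|S}}   &\to 1-\He(X|S),   &  \tfrac1N\abs{\mathcal H_{U|S}}   &\to \He(X|S),   \\
    \tfrac1N\abs{\mathcal L_{U|S,Y}} &\to 1-\He(X|S,Y), &  \tfrac1N\abs{\mathcal H_{U|S,Y}} &\to \He(X|S,Y).
\end{align}

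Next I would invoke the monotonicity \eqref{eq:bhattacharyya_conditional} with $\vec Y$ as the extra conditioning variable, giving $Z(U_i|\vec U_\bbi,\vec S,\vec Y)\le Z(U_i|\vec U_\bbi,\vec S)$ for every $i$. Thus any index that is low under $\vec S$ alone is low under $(\vec S,\vec Y)$, and any index that is high under $(\vec S,\vec Y)$ is high under $\vec S$, i.e.\ $\mathcal L_{U|S}\subseteq\mathcal L_{U|S,Y}$ and $\mathcal H_{U|S,Y}\subseteq\mathcal H_{U|S}$; in particular, for $\delta_N<\tfrac12$ the union in \eqref{eq:conditional_polarization1} is disjoint. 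The two identities then follow by counting. The union is $\abs{\mathcal L_{U|S}\cup\mathcal H_{U|S,Y}}=\abs{\mathcal L_{U|S}}+\abs{\mathcal H_{U|S,Y}}\to N\bigl(1-\He(X|S)+\He(X|S,Y)\bigr)=N\bigl(1-\MI(X;Y|S)\bigr)$, which is \eqref{eq:conditional_polarization1}. For the intersection, since the fraction of indices that remain unpolarized under $(\vec S,\vec Y)$ vanishes and $\mathcal H_{U|S,Y}\subseteq\mathcal H_{U|S}$, the set $\mathcal H_{U|S}$ splits (up to an $o(N)$ term) into its parts inside $\mathcal L_{U|S,Y}$ and $\mathcal H_{U|S,Y}$, so $\abs{\mathcal H_{U|S}\cap\mathcal L_{U|S,Y}}=\abs{\mathcal H_{U|S}}-\abs{\mathcal H_{U|S,Y}}+o(N)\to N\,\MI(X;Y|S)$, which is \eqref{eq:conditional_polarization2}.

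The main obstacle I anticipate is the first step: verifying cleanly that conditioning on the \emph{common} side information $\vec S$ neither disturbs the polarization martingale nor degrades the sub-exponential rate $\delta_N=2^{-N^\beta}$. The essential content is recognizing that $S$ enters symmetrically at both terminals and therefore acts merely as an enlargement of the output alphabet, leaving the single-step recursion for the conditional Bhattacharyya parameter identical to the unconditioned one. Once this reduction is granted, the remaining arguments are entropy bookkeeping and elementary set counting built on the cited single-level result.
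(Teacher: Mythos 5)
Your proposal is correct and follows essentially the same route as the paper: both reduce to the Honda--Yamamoto polarization result with $\vec S$ (and $\vec S,\vec Y$) treated as enlarged channel output, use the monotonicity \eqref{eq:bhattacharyya_conditional} to get $\mathcal L_{U|S}\subseteq\mathcal L_{U|S,Y}$ and hence disjointness of the union, and finish by counting. The only cosmetic difference is in \eqref{eq:conditional_polarization2}, where you split $\mathcal H_{U|S}$ directly using $\mathcal H_{U|S,Y}\subseteq\mathcal H_{U|S}$ while the paper passes to complements and reuses \eqref{eq:conditional_polarization1}; the two computations are equivalent.
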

\begin{proof}
    Using \cite[Eqs.~(38), (39)]{Honda13}, the sets \eqref{eq:lus} to \eqref{eq:husy} polarize analogously to \eqref{eq:asym_polarization1}, \eqref{eq:asym_polarization2}, i.e.,
    \begin{align}
        1 - \lim_{N\to\infty} \frac{\abs{\mathcal L_{U|S}}}{N}   &= \lim_{N\to\infty} \frac{\abs{\mathcal H_{U|S}}}{N}   &=& \He(X|S) \label{eq:s_polarized} \\ %
        1 - \lim_{N\to\infty} \frac{\abs{\mathcal L_{U|S,Y}}}{N} &= \lim_{N\to\infty} \frac{\abs{\mathcal H_{U|S,Y}}}{N} \!\!\! &=& \He(X|S,Y) \textnormal{.} \label{eq:sy_polarized} %
    \end{align}
    We next show \eqref{eq:conditional_polarization1}.
    Basic set theory gives
    \begin{equation}
        \abs{\mathcal L_{U|S} \cup \mathcal H_{U|S,Y}} = \abs{\mathcal L_{U|S}} + \abs{\mathcal H_{U|S,Y}} - \abs{\mathcal L_{U|S} \cap \mathcal H_{U|S,Y}} \textnormal{.} \label{eq:set_size}
    \end{equation}
    By \eqref{eq:bhattacharyya_conditional}, we have $\mathcal L_{U|S} \subseteq \mathcal L_{U|S,Y}$ and thus $\mathcal L_{U|S} \cap \mathcal H_{U|S,Y} = \emptyset$.
    Inserting the first term of \eqref{eq:s_polarized} and the second term of \eqref{eq:sy_polarized} into \eqref{eq:set_size} gives \eqref{eq:conditional_polarization1}.
    To prove \eqref{eq:conditional_polarization2}, observe that
    \begin{align}
        \lim_{N\to\infty} \frac{\abs{\mathcal H_{U|S} \cap \mathcal L_{U|S,Y}}}{N} &= 1 - \lim_{N\to\infty} \frac{\abs{(\mathcal H_{U|S} \cap \mathcal L_{U|S,Y})^C}}{N} \\
                                                                                &= 1 - \lim_{N\to\infty} \frac{\abs{\mathcal H_{U|S}^C \cup \mathcal L_{U|S,Y}^C}}{N} \\
                                                                                &= 1 - \lim_{N\to\infty} \frac{\abs{\mathcal L_{U|S} \cup \mathcal H_{U|S,Y}}}{N} \label{eq:set_complements}
    \end{align}
    where the last equality follows from \eqref{eq:s_polarized} and \eqref{eq:sy_polarized}.
    Combining \eqref{eq:conditional_polarization1} and \eqref{eq:set_complements} yields \eqref{eq:conditional_polarization2}.
\end{proof}

\begin{figure}[tb]
    \centerline{\scalebox{0.56}{\includegraphics{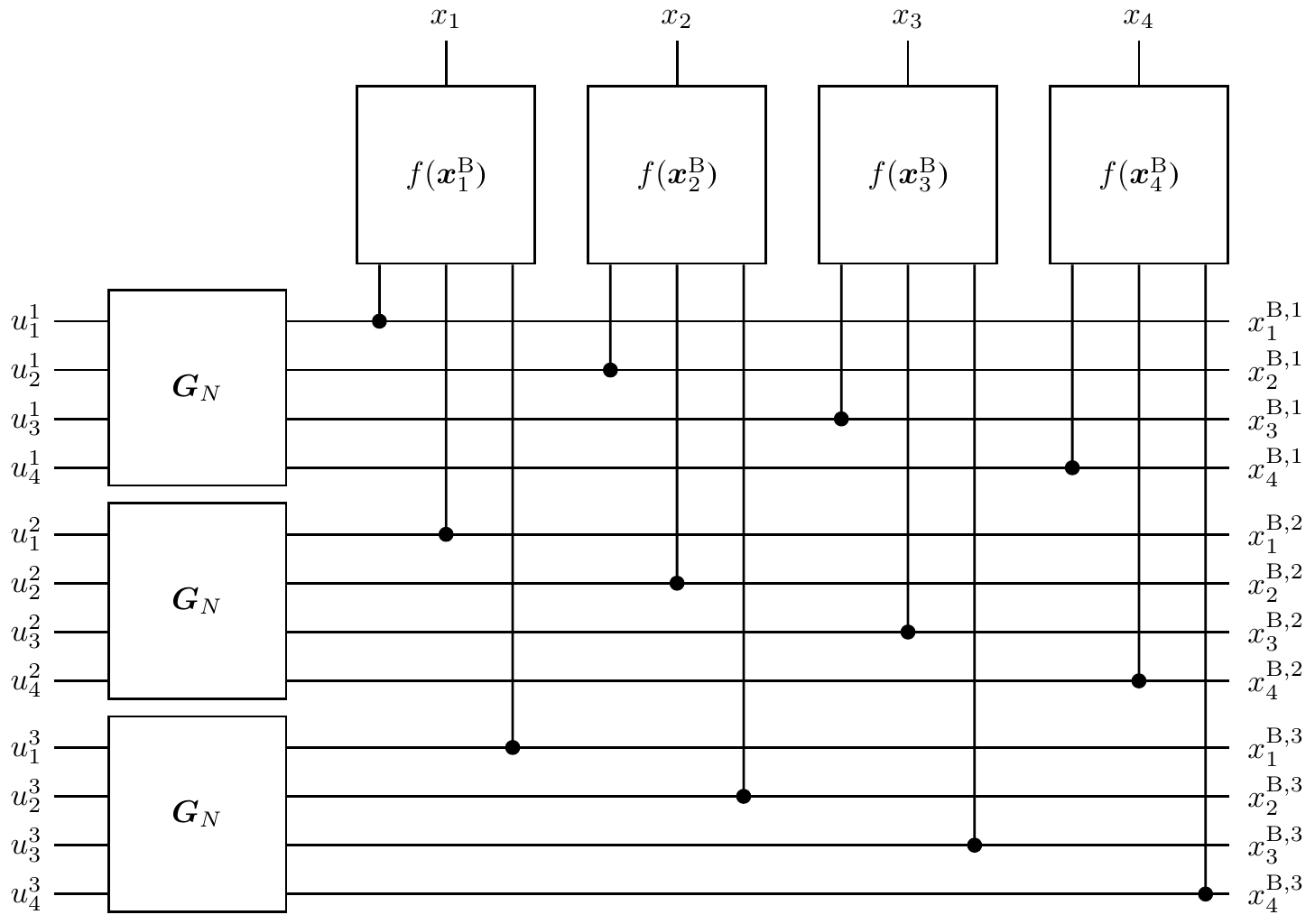}}}
    \caption{Multilevel polar-coded modulation with $M=2^m=8$ and $N=4$.}
    \label{fig:pcm}
\end{figure}

\section{Multilevel Polar Codes for DMCs}
\label{sec:dimc}
Consider the multilevel code construction in \cite{Seidl13} depicted in Fig.~\ref{fig:pcm}.
For a channel with input alphabet $\mathcal X$ of cardinality $M = \abs{\mathcal X} = 2^m$, each symbol $x$ is labelled with $m$ bits, \ie, $x = f(x^{\mathrm B, 1}x^{\mathrm B, 2}\dotsi x^{\mathrm B, m})$ where $f(\cdot)$ is invertible.
A codeword has a length of $N$ symbols or $mN$ bits.
Each bitlevel $\vec x^{\mathrm B, \ell}$, $\ell\in\bbm$, is encoded with a separate polar transform as $\vec u^\ell = \vec x^{\mathrm B, \ell} \matr G_N$. %
Using Lem.~\ref{lem:polarize_conditional_mi}, we can prove the polarization of such a multilevel polar code.

\begin{thm}
    \label{thm:multilevel_polarization}
    Let $W\colon X \to Y$ be a \gls{dmc} with joint distribution $X,Y \sim P_{Y|X}P_X$ and $\abs{\mathcal X} = 2^m$.
    Let $X^{\mathrm B, \ell}$, $\ell\in\bbm$, be the $\ell$-th bit of the binary representation $\vec X^{\mathrm B}$ of symbol $X = f(\vec X^{\mathrm B})$, and let $\vec U^\ell = \vec X^{\mathrm B, \ell} \matr G_N$.
    Then, the sets
    \begin{align}
        \mathcal L'_{U}   &= \set{(\ell,i): Z(U^\ell_i|\vec V^\ell_i) < \delta_N}             \\
        \mathcal H'_{U}   &= \set{(\ell,i): Z(U^\ell_i|\vec V^\ell_i) > 1 - \delta_N}         \\
        \mathcal L'_{U|Y} &= \set{(\ell,i): Z(U^\ell_i|\vec V^\ell_i, \vec Y) < \delta_N}     \\
        \mathcal H'_{U|Y} &= \set{(\ell,i): Z(U^\ell_i|\vec V^\ell_i, \vec Y) > 1 - \delta_N}
    \end{align}
    with $\ell\in\bbm$, $i\in\bbN$, and $\vec V^\ell_i = (\vec U^\ell_\bbi, \vec X^{\mathrm B, \bbl})$ polarize, \ie, we have
    \begin{align}
        \lim_{N\to\infty} \frac1N\abs{\mathcal L'_{U} \cup \mathcal H'_{U|Y}} &= 1 - \MI(X; Y) \label{eq:mlhy_cmi} \\
        \lim_{N\to\infty} \frac1N\abs{\mathcal I}
        &= \MI(X; Y) \label{eq:mlhy_mi}
    \end{align}
    where $\mathcal I = \mathcal H'_U \cap \mathcal L'_{U|Y}$.
\end{thm}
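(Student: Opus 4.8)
The plan is to reduce the statement to $m$ applications of Lemma~\ref{lem:polarize_conditional_mi}, one per bitlevel, and then stitch the levels together with the chain rule of mutual information. Fix a level $\ell\in\bbm$ and regard the preceding bitlevels as side information: set $S\triangleq(X^{\mathrm B,1},\dots,X^{\mathrm B,\ell-1})$, so that its $N$-fold realization is $\vec S=\vec X^{\mathrm B,\bbl}$, and treat $X^{\mathrm B,\ell}$ as the binary channel input with $\vec U^\ell=\vec X^{\mathrm B,\ell}\matr G_N$. The key observation is that marginalizing the not-yet-fixed bitlevels $X^{\mathrm B,\ell+1},\dots,X^{\mathrm B,m}$ turns $W$ into a genuine \gls{bidmc} $X^{\mathrm B,\ell}\to Y$ with side information $S$, whose per-symbol law factors as $P_{Y|X^{\mathrm B,\ell},S}P_{X^{\mathrm B,\ell}|S}P_S$. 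Since the $N$ channel uses are \gls{iid} and $X=f(\vec X^{\mathrm B})$ is a per-symbol bijection, $(\vec X^{\mathrm B,\ell},\vec S,\vec Y)$ are $N$ \gls{iid} realizations of $(X^{\mathrm B,\ell},S,Y)$, so Lemma~\ref{lem:polarize_conditional_mi} applies verbatim at level $\ell$.

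The one genuinely delicate step is this reduction, specifically recognizing that the conditioning vector $\vec V^\ell_i=(\vec U^\ell_\bbi,\vec X^{\mathrm B,\bbl})$ appearing in the theorem is exactly the pair $(\vec U_\bbi,\vec S)$ of the lemma. Once that is established, slicing the four theorem sets at level $\ell$ yields precisely $\mathcal L_{U|S},\mathcal H_{U|S},\mathcal L_{U|S,Y},\mathcal H_{U|S,Y}$ (with $\vec U=\vec U^\ell$), so all four polarize.

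Next I would combine the levels. Because the index pairs $(\ell,i)$ are disjoint across $\ell$, slicing commutes with the set operations and
\begin{align}
    \abs{\mathcal L'_U\cup\mathcal H'_{U|Y}} &= \sum_{\ell=1}^m \abs{\mathcal L_{U|S}\cup\mathcal H_{U|S,Y}} \textnormal{,} \\
    \abs{\mathcal I} &= \sum_{\ell=1}^m \abs{\mathcal H_{U|S}\cap\mathcal L_{U|S,Y}} \textnormal{,}
\end{align}
where in the $\ell$-th summand $\vec U=\vec U^\ell$ and $\vec S=\vec X^{\mathrm B,\bbl}$. Dividing by $N$, letting $N\to\infty$, and invoking \eqref{eq:conditional_polarization1}, \eqref{eq:conditional_polarization2} gives the per-level limits $1-\MI(X^{\mathrm B,\ell};Y\mid \vec X^{\mathrm B,\bbl})$ and $\MI(X^{\mathrm B,\ell};Y\mid \vec X^{\mathrm B,\bbl})$, respectively.

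The crux is then the chain rule of mutual information,
\begin{equation}
    \sum_{\ell=1}^m \MI(X^{\mathrm B,\ell};Y\mid \vec X^{\mathrm B,\bbl}) = \MI(\vec X^{\mathrm B};Y) = \MI(X;Y) \textnormal{,}
\end{equation}
where the last equality uses that $f$ is invertible, so $\vec X^{\mathrm B}$ and $X$ determine each other. Substituting into the summed data-set limit establishes \eqref{eq:mlhy_mi} directly. For the complementary union the same summation gives $\sum_{\ell=1}^m\bigl(1-\MI(X^{\mathrm B,\ell};Y\mid\vec X^{\mathrm B,\bbl})\bigr)=m-\MI(X;Y)$, i.e.\ $m$ minus the total mutual information spread over the $mN$ coordinates, which is the natural form of \eqref{eq:mlhy_cmi}. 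I expect no real difficulty beyond the level-wise reduction: the remainder is disjoint-partition bookkeeping together with the chain rule, both routine.
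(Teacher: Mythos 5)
Your proposal is correct and follows essentially the same route as the paper: slice the four sets by bitlevel, apply Lemma~\ref{lem:polarize_conditional_mi} with $\vec S=\vec X^{\mathrm B,\bbl}$ as the side information so that $\vec V^\ell_i=(\vec U^\ell_\bbi,\vec S)$, exploit the disjointness of the $(\ell,i)$ index slices to sum the per-level cardinalities, and finish with the chain rule and the invertibility of $f$. One remark: your value $m-\MI(X;Y)$ for $\lim_N \frac1N\abs{\mathcal L'_U\cup\mathcal H'_{U|Y}}$ is indeed what the argument produces (the index set has $mN$ elements, so the data set and its asymptotic complement must sum to $m$ after dividing by $N$), which means the ``$1-\MI(X;Y)$'' in \eqref{eq:mlhy_cmi} is a normalization slip in the statement rather than a gap in your proof; the paper's own proof of \eqref{eq:mlhy_cmi} by complementation, analogous to \eqref{eq:conditional_polarization2}, yields the same $m-\MI(X;Y)$.
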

\begin{proof}
    We begin by showing \eqref{eq:mlhy_mi}.
    For each bitlevel $\ell\in\bbm$, consider the sets
    \begin{align}
        \mathcal H'_{U^\ell}   &= \set{(\ell, i): Z(U^\ell_i|\vec V^\ell_i) > 1 - \delta_N} \\
        \mathcal L'_{U^\ell|Y} &= \set{(\ell, i): Z(U^\ell_i|\vec V^\ell_i, \vec Y) < \delta_N} .
    \end{align}
    By Lem.~\ref{lem:polarize_conditional_mi}, we have
    \begin{equation}
        \lim_{N\to\infty} \frac1N\abs{\mathcal H'_{U^\ell} \cap \mathcal L'_{U^\ell|Y}} = \MI(X^{\mathrm B, \ell}; Y | \vec X^{\mathrm B, \bbl}) \textnormal{.}
    \end{equation}
    Note that $\mathcal H'_{U^\ell} \cap \mathcal H'_{U^k} = \mathcal L'_{U^\ell|Y} \cap \mathcal L'_{U^k|Y} = \mathcal H'_{U^\ell} \cap \mathcal L'_{U^k|Y} = \emptyset$ by definition for $\ell\neq k$  and
    \begin{equation}
        \mathcal H'_U = \bigcup_{\ell\in\bbm} \mathcal H'_{U^\ell}, \quad \mathcal L'_{U|Y} = \bigcup_{\ell\in\bbm} \mathcal L'_{U^\ell|Y} \textnormal{.}
    \end{equation}
    Additionally, since $f(\cdot)$ is invertible we have
    \begin{align}
        \lim_{N\to\infty}\frac1N\abs{\mathcal I} &= \lim_{N\to\infty} \frac1N\sum_{\ell\in\bbm} \abs{\mathcal H'_{U^\ell} \cap \mathcal L'_{U^\ell|Y}} \\
        &= \sum_{\ell\in\bbm} \MI(X^{\mathrm B, \ell}; Y | \vec X^{\mathrm B, \bbl}) \\
        &= \MI(X; Y)
    \end{align}
    which proves \eqref{eq:mlhy_mi}.
    The proof of \eqref{eq:mlhy_cmi} follows analogously to the proof of \eqref{eq:conditional_polarization2}.
\end{proof}

\subsection{Encoding and Decoding}
\label{sec:enc_dec}

The encoding is similar to \cite{Honda13}.
Define the message set $\mathcal M\subseteq\mathcal I$ as the set of bit positions $(\ell, i)$ populated by data bits.
The remaining bits $u^\ell_i$, $(\ell, i)\in\mathcal M^C$, are chosen successively and randomly with probability $P_{U^\ell_i|\vec V^\ell_i}(\cdot|\vec v^\ell_i)$, where $\vec v^\ell_i$ again includes the bits decided before $(\ell,i)$.
To compute $P_{U^\ell_i|\vec V^\ell_i}(\cdot|\vec v^\ell_i)$,  we factor $P_X(x_i)$ as
\begin{equation}
    P_X(x_i) = \prod_{\ell\in\bbm} P_{X^{\mathrm B,\ell}|\vec X^{\mathrm B,\bbl}}(x^{\mathrm B, \ell}_i | \vec x^{\mathrm B, \bbl}_i) \textnormal{.}
\end{equation}
For each bitlevel $\ell$, \gls{msd} computes $P_{X^{\mathrm B,\ell}|\vec X^{\mathrm B,\bbl}}(\cdot | \vec x^{\mathrm B, \bbl}_i)$, $i\in\bbN$ \cite{Imai77}, and provides these values to a \gls{sc} decoder that computes $P_{U^\ell_i|\vec V^\ell_i}(\cdot|\vec v^\ell_i)$ and decides on $u^\ell_i$.

The decoder uses the same \gls{msd} structure with \gls{sc} decoding.
The bits $\hat u^\ell_i$, $(\ell, i)\in\mathcal M$, are estimated as $\hat u^\ell_i = \argmax_{u} P_{U^\ell_i|\vec V^\ell_i,\vec Y}(u|\vec v^\ell_i, \vec y)$ assuming perfect knowledge of the previous bits $\vec v^\ell_i$.
The non-message bits are decided from $P_{U^\ell_i|\vec V^\ell_i}(\cdot|\vec v^\ell_i)$ requiring randomness that is shared by the transmitter and receiver.
The decoding error probability $\Pr(\set{\vec{\hat U} \neq \vec U})$ is averaged over this randomness.

\begin{thm}
    \label{thm:error_exp}
    Let $W\colon X \to Y$ and define $\mathcal I$ as in Thm.~\ref{thm:multilevel_polarization}.
    Let $\mathcal M \subseteq \mathcal I$, and consider encoding and decoding as described above.
    Then the average decoding error probability is $\Pr(\set{\vec{\hat U} \neq \vec U}) = \mathcal O(\nbp)$ for any $0<\beta'<\beta<1/2$ by choosing the polarization sets with $\delta_N = 2^{-N^\beta}$.
\end{thm}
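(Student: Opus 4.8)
The plan is to reduce the multilevel block-error probability to a union bound over the message positions, following the genie-aided successive-cancellation analysis of Honda and Yamamoto \cite{Honda13}. First I would fix the multistage decoding order, i.e., bitlevel by bitlevel and within each bitlevel by index $i$, and decompose the total error event into the disjoint events $E_{\ell,i}$ that $(\ell,i)$ is the \emph{first} position at which the decoder's decision differs from the transmitted bit, so that $\Pr(\set{\vec{\hat U}\neq\vec U})=\sum_{\ell,i}\Pr(E_{\ell,i})$. Since the polar transform is its own inverse, correct recovery of all bits at bitlevels $1,\dots,\ell-1$ yields the correct side information $\vec X^{\mathrm B,\bbl}$, so on the event that all earlier positions are correct the decoder's $\vec v^\ell_i$ coincides with the encoder's; hence $\Pr(E_{\ell,i})$ is at most the per-position error probability under this ``all-previous-correct'' assumption.

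Next I would treat the two kinds of positions separately. For a non-message position $(\ell,i)\in\mathcal M^C$, both encoder and decoder draw from $P_{U^\ell_i|\vec V^\ell_i}(\cdot|\vec v^\ell_i)$ using the shared randomness; given matching $\vec v^\ell_i$ they produce identical bits, so $\Pr(E_{\ell,i})=0$ after averaging over the randomness. For a message position $(\ell,i)\in\mathcal M$, the decoder applies $\hat u^\ell_i=\argmax_u P_{U^\ell_i|\vec V^\ell_i,\vec Y}(u|\vec v^\ell_i,\vec y)$, and the standard bound on the MAP error probability of a binary variable in terms of its Bhattacharyya parameter gives $\Pr(E_{\ell,i})\le Z(U^\ell_i|\vec V^\ell_i,\vec Y)$.

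Combining the two cases yields
\begin{equation}
    \Pr(\set{\vec{\hat U}\neq\vec U}) \le \sum_{(\ell,i)\in\mathcal M} Z(U^\ell_i|\vec V^\ell_i,\vec Y).
\end{equation}
Because $\mathcal M\subseteq\mathcal I\subseteq\mathcal L'_{U|Y}$, every summand is below $\delta_N=2^{-N^\beta}$, and since there are at most $mN$ index pairs in total, the bound becomes $\Pr(\set{\vec{\hat U}\neq\vec U})\le mN\,2^{-N^\beta}$. Finally I would absorb the polynomial prefactor: for any $0<\beta'<\beta$, the ratio $mN\,2^{-N^\beta}/2^{-N^{\beta'}}=mN\,2^{N^{\beta'}-N^\beta}\to0$ because $N^{\beta'}-N^\beta\to-\infty$ dominates $\log_2(mN)$, which gives $\Pr(\set{\vec{\hat U}\neq\vec U})=\mathcal O(\nbp)$.

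The main obstacle I anticipate is justifying rigorously that the shared randomness contributes no error and that averaging over it preserves the per-position MAP bound at the message positions; this is precisely where the Honda-Yamamoto argument is subtler than the symmetric case, since the non-message bits are random rather than deterministically frozen. I would handle it by conditioning on the shared randomness, bounding the conditional first-error probability at each position as above, and then taking the expectation over the randomness, under which the non-message positions drop out and the deterministic Bhattacharyya bound at the message positions is unaffected.
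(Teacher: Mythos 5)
Your route is genuinely different from the paper's, and it contains a gap. The paper factors the block error as $\Pr(\set{\vec{\hat U}\neq\vec U}) = 1-\prod_{\ell\in\bbm}\bigl(1-\Pr(\mathcal E^\ell\mid\mathcal C^\bbl)\bigr)$, invokes \cite[Thm.~3]{Honda13} and \cite[Thm.~4.3.9]{Liu16} as a black box for the equivalent channel of each bitlevel (with the lower bitlevels as side information at both ends), and closes with Bernoulli's inequality to absorb the factor $m$. You instead flatten the scheme into a single first-error union bound over all $mN$ positions and attempt to reprove the per-position Honda--Yamamoto bound from scratch. That is a legitimate strategy in principle, but it obliges you to supply the hardest step of \cite[Thm.~3]{Honda13} yourself, and that step is missing from your argument.

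The gap is the distribution mismatch at the \emph{message} positions. The bound $\Pr(E_{\ell,i})\le Z(U^\ell_i|\vec V^\ell_i,\vec Y)$ for the MAP decision is valid when $(\vec U,\vec Y)$ follows the ideal joint law $P_{\vec U,\vec Y}$ under which the Bhattacharyya parameters were computed. The encoder does not induce that law: at positions $(\ell,i)\in\mathcal M$ it inserts \emph{uniform} data bits, whereas under $P$ these bits would follow $P_{U^\ell_i|\vec V^\ell_i}$, which is only approximately uniform (membership in $\mathcal H'_U$ gives $Z(U^\ell_i|\vec V^\ell_i)>1-\delta_N$, not $=1$). The induced law $Q_{\vec U,\vec Y}$ therefore differs from $P_{\vec U,\vec Y}$, and the per-position perturbations accumulate over up to $mN$ positions; one must bound the total variation distance between $Q$ and $P$ (a Pinsker-type estimate giving a term of order $N\sqrt{\delta_N}$, which is still $\mathcal O(\nbp)$) and add it to the Bhattacharyya sum. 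This is precisely the content of \cite[Thm.~3]{Honda13}. Your closing remark senses a subtlety but locates it at the wrong place: the shared randomness at the non-message positions is harmless, since those bits are drawn from the \emph{correct} conditional $P_{U^\ell_i|\vec V^\ell_i}$ and reproduced exactly at the receiver; the mismatch you must control sits at the message positions, and conditioning on the shared randomness does nothing to repair it. To fix the proof, either import the total-variation bound from \cite{Honda13} explicitly, or do as the paper does and cite that theorem once per bitlevel.
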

\begin{proof}
    Consider
    \begin{equation}
        \Pr(\set{\vec{\hat U} \neq \vec U}) = 1 - \prod_{\ell\in\bbm} \left(1 - \Pr\mleft(\mathcal E^\ell \middle| \mathcal C^\bbl \mright)\right)
    \end{equation}
    where $\mathcal E^\ell = \set{\vec{\hat U}^\ell \neq \vec U^\ell}$ and $\mathcal C^\bbl = \set{\vec{\hat U}^\bbl = \vec U^\bbl}$. %
    Let the equivalent channel for the $\ell$-th bitlevel be the channel that has bit $\ell$ as input and bits $\bbl$ as side-information available at transmitter and receiver.
    By \cite[Thm.~4.3.9]{Liu16}, \cite[Thm.~3]{Honda13}, the \gls{hy} code over this equivalent channel for bitlevel $\ell$ has an average decoding error probability $\Pr\mleft(\mathcal E^\ell \middle| \mathcal C^\bbl \mright) = \mathcal O(\nbp)$ with $\beta'<\beta<1/2$ and uniformly chosen messages.
    
    Thus, for each bitlevel $\ell$ there is a positive constant $c_\ell$ and a block length $N_\ell$ so that $\Pr\mleft(\mathcal E^\ell \middle| \mathcal C^\bbl \mright) \le c_\ell\nbp$ for all $N > N_\ell$.
    By choosing $c = \max_{\ell\in\bbm} c_\ell$ and $N^* = \max_{\ell\in\bbm} N_\ell$, we can bound the error probability for any $\ell\in\bbm$ by
    \begin{equation}
        \Pr\mleft(\mathcal E^\ell \middle| \mathcal C^\bbl \mright) \le c\nbp
    \end{equation}
    for all $N > N^*$.
    The average decoding error probability under \gls{msd} can thus be bounded as
    \begingroup
    \allowdisplaybreaks
    \begin{align}
        \Pr(\set{\vec{\hat U} \neq \vec U}) &= 1 - \prod_{\ell\in\bbm} \left(1 - \Pr\mleft(\mathcal E^\ell \middle| \mathcal C^\bbl\mright)\right) \\
        &\le 1 - \left( 1 - c\,\nbp \right)^m \\
        &\le mc\,\nbp
    \end{align}
    \endgroup
    where the final step follows by Bernoulli's inequality.
\end{proof}

We remark that one can extend Thms.~\ref{thm:multilevel_polarization} and \ref{thm:error_exp} to discrete-input, continuous-output channels along the lines of \cite[Part~IV, Appendix~7]{Shannon48}.

\section{Short Blocklength Codes}
\label{sec:shaping}

A pragmatic approach is to choose the non-data bits with a deterministic rule \cite{Chou15}, \cite{Mondelli18} where the bits with large $\Huuly$ are fixed to $0$ and the bits with small $\Huul$ are decided as $\argmax_u P_{U_i|\vec V^\ell_i}(u|\vec{\hat v}^\ell_i)$ (``\gls{dm} bits'').
The remaining bits are data bits.
The decoder estimates the non-frozen bits via $\argmax_u P_{U_i|\vec V^\ell_i,\vec Y}(u|\vec{\hat v}^\ell_i, \vec y)$.
The \gls{dm} bits with $\Huul\approx0$ also have $\Huuly\approx0$ and are thus reliably estimated.
We call the resulting scheme \gls{mlhy} coding. 
The entropies used for code construction can be computed with, e.g., \gls{mc} integration or density evolution with Gaussian approximation \cite{Bocherer17}.
Similar to \cite{Seidl13}, we jointly compute the bitchannel entropies over all bitlevels.

\subsection{Distribution Matching}

Consider first a code that performs only \gls{dm}, \ie, there are no frozen bits. To evaluate the performance, we consider the rate loss \cite[Sec. V-B]{Bocherer15}, \cite{Wiegart19}, \cite{Bohnke20}
\begin{equation}
    \Delta_{\mathrm R} = \He(\widehat P_X) - R = \He(\widehat P_X) - \frac{\abs{\mathcal U}}{N}
\end{equation}
where $\widehat P_X$ is the empirical distribution of $X$, and $\mathcal U$ indexes the bitchannels with uniformly-distributed data bits.
Typically, $\mathcal U$ consists of the bitchannels with $\Huul\approx1$.

\begin{figure}[tb]
  \centerline{\includegraphics{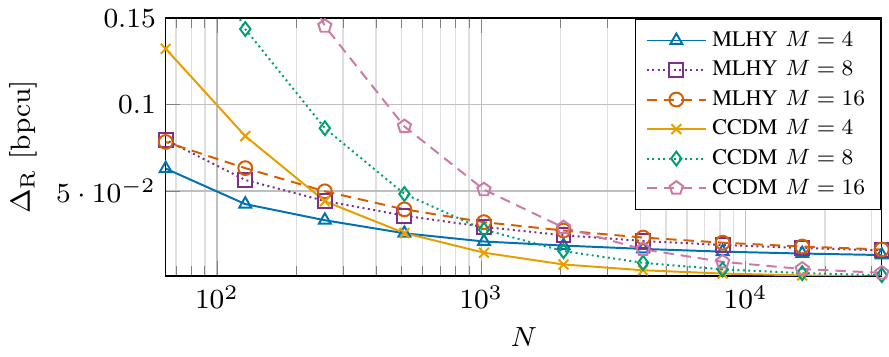}}
  \caption{Rate loss for MLHY with $L=32$ and \gls{ccdm} with different $M$.}
  \label{fig:mlhy_rate_loss}
\end{figure}

Fig.~\ref{fig:mlhy_rate_loss} shows the rate loss for \gls{ccdm} \cite{Schulte16} and for \gls{mlhy} \gls{dm} with \gls{scl} encoding \cite{Tal15} with list size $L=32$ instead of randomized encoding.
The target distributions are the maximum entropy distributions for the rates in \gls{bpcu}:
\begin{itemize}
    \item $R=\SI{1.625}{\bpcu}$ for $M=4$;
    \item $R=\SI{2.375}{\bpcu}$ for $M=8$;
    \item $R=\SI{3.250}{\bpcu}$ for $M=16$.
\end{itemize}
The \gls{mlhy} code is constructed by using the $RN$ bitchannels with the largest $\Huul$ for data.
The quantized distribution and the rate for \gls{ccdm} are determined by \cite[Algorithm~2]{Bocherer16} and \cite[Eq.~(37)]{Bocherer15}, respectively.

Observe that \gls{ccdm} is better than \gls{mlhy} \gls{dm} for large block lengths \cite{Schulte17}, \cite{Wiegart19}. This is expected since the polar code has a rigid structure. However, \gls{mlhy} \gls{dm} has a smaller rate loss than \gls{ccdm} for practically-relevant block lengths up to $N=1024$.  We observe that the rate loss of \gls{ccdm} increases with $M$ for short and moderate block lengths whereas the rate loss of \gls{mlhy} degrades only slightly. \gls{mlhy} \gls{dm} thus has superior performance for short block lengths and offers the flexibility to design joint \gls{dm} and \gls{fec} schemes.

\subsection{End-to-end Frame Error Rates}

We compare \gls{mlhy} coding with the \gls{pcpas} scheme proposed in \cite{Prinz17}.
\Gls{pcpas} uses the \gls{pas} architecture \cite{Bocherer15} with a systematic multilevel polar code as \gls{fec} and \gls{ccdm} \cite{Schulte16} for \gls{dm}.

Consider bipolar \gls{ask} and unipolar \gls{pam}. The input alphabets of cardinality $M=8$ and $M=4$, respectively, are
\begin{align*}
    \mathcal{X}_\mathrm{ASK} = \{\pm 7, \pm 5, \pm 3, \pm 1\}, \quad
    \mathcal{X}_\mathrm{PAM} = \{0,1,2,3\}.
\end{align*}
For both cases, we choose $P_X(x) \propto \exp(-\nu\abs{x}^2)$ so that $\nu$ minimizes the \gls{fer}.

The transmitter and receiver use \gls{scl} decoding with list size $L=32$ and an optional outer \gls{crc} code.
The code is designed for a specific rate $R$ and block length $N$.
There are three relevant design parameters.
The first is the \gls{dsnr} that determines the noise variance for computing $\Huuly$.
Second, we introduce a design parameter $\kappa$ for code optimization and choose the rate-optimal $P_X$ at $\kappa\cdot\mathrm{dSNR}$ as our target distribution based on which we also compute $\Huul$.
This parameter can improve the finite length performance because the optimal channel input distribution might deviate from the asymptotically optimal one.
Finally, we optimize over the number $N_\mathrm{DM}$ of \gls{dm} positions.
The code is constructed by choosing the $N_\mathrm{DM}$ positions with lowest $\Huul$ for \gls{dm} and the $N(1-R) - N_\mathrm{DM}$ positions with highest $\Huuly$ for \gls{fec}.
The remaining positions are used for data.
We use set-partitioning labelling \cite{Wachsmann99} for the channel input symbols.

The scheme from~\cite{Prinz17} must be modified to transmit PAM symbols with polar-coded \gls{pas}.
First, \gls{pas} requires symmetric distributions, as described in the introduction.
The one-sided sampled Gaussian distribution that we use for PAM does not fulfill this requirement.
Instead, we approximate a one-sided sampled Gaussian distribution $\tilde P_X$ for $M$-PAM by assigning $M/2$ different probability masses to pairs of points as described in~\cite{He19}, \cite{Kim20}, i.e., $\tilde P_X(2i) = \tilde P_X(2i+1)\; \forall i=0,1,\ldots,M/2-1$.
The input distribution is thus suboptimal.
Second, polar-coded \gls{pas} uses a set-partitioning labelling.
For ASK modulation as in~\cite{Prinz17}, the last bitlevel carries the sign of the constellation.
For PAM modulation, the first bitlevel refers to the bit that maps the transmitted signal to either the one or the other point of a pair.
This facilitates systematic encoding and we can omit the labelling transformation described in~\cite{Prinz17}.

\begin{figure}[tb]
    \centerline{\includegraphics{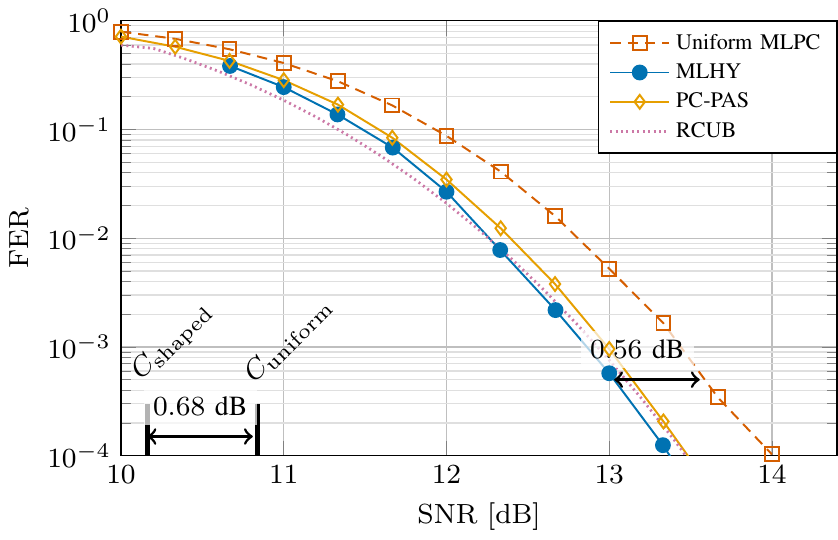}}
    \caption{Performance of MLHY coding compared to PC-PAS \cite{Prinz17}, uniform MLPC, and the RCUB, with an $8$-ASK constellation, $N=64$, $L=32$, at a $R = \SI{1.75}{\bpcu}$, shaped $\mathrm{dSNR} = \SI{13}{\dB}$, $\kappa = \SI{-1}{\dB}$, $N_\mathrm{DM} = 23$, and uniform $\mathrm{dSNR} = \SI{13}{\dB}$. The MLHY and MLPC codes use an outer CRC-7 and PC-PAS a CRC-4 together with a type check. For PC-PAS, $\mathrm{dSNR}=\SI{8}{dB}$ and $\kappa=\SI{-0.6}{dB}$.} 
    \label{fig:mlhy_fer_ask8}
\end{figure}

Fig.~\ref{fig:mlhy_fer_ask8} shows the \gls{fer} for an $8$-\gls{ask} constellation and $N=64$.
We also show the \gls{rcub} \cite{Polyanskiy10} computed for the distribution realized by the \gls{mlhy} encoder, and the \gls{fer} for uniform \gls{mlpc} \cite{Seidl13}.
The codes and bounds are designed for $R=\SI{1.75}{\bpcu}$.
The bold black lines at \SI{10.16}{\dB} and \SI{10.84}{\dB} show the constellation-constrained capacities for shaped and uniform transmission, respectively.

The error curve slopes for \gls{mlhy} coding and uniform \gls{mlpc} are similar, resulting in an almost constant shaping gain in the waterfall region.
Both \gls{mlhy} coding and \gls{pcpas} perform close to the theoretical shaping gain of \SI{0.68}{\dB} and to the \gls{rcub}. The \gls{mlhy} scheme thus performs on par with \gls{pcpas}, even without a dedicated code optimization beyond a random search over the design paramters.

We describe potential improvements.
Because \gls{ccdm} codewords are all of the same type, \gls{pcpas} permits an additional list pruning step \cite{Prinz17} so that the length of the outer \gls{crc} code can be reduced.
The performance of \gls{mlhy} coding may be improved by further adjusting the design parameters, optimizing the bitchannel selection process, optimizing the \gls{crc} polynomial and length, and checking candidate codewords against \gls{dm} constraints at the decoder.

\begin{figure}[tb]
    \centerline{\includegraphics{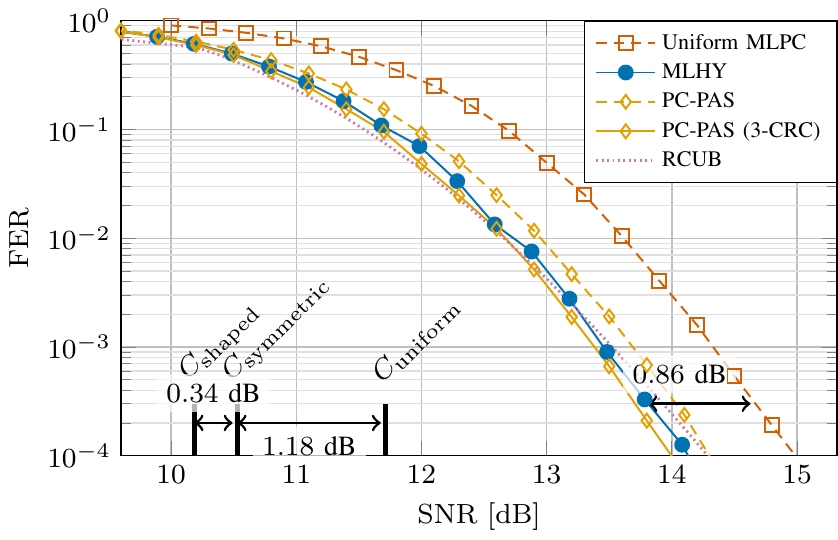}}
    \caption{Performance of MLHY coding compared to PC-PAS, uniform MLPC, and the RCUB, with an $4$-PAM constellation, $N=64$, $L=32$, at a $R = \SI{1.25}{\bpcu}$, shaped $\mathrm{dSNR} = \SI{18.1}{\dB}$, $\kappa= \SI{-0.9}{\dB}$, $N_\mathrm{DM} = 24$, and uniform $\mathrm{dSNR} = \SI{19.25}{\dB}$. The MLHY and MLPC codes do not use an outer CRC. For PC-PAS we depict curves with and without outer CRC. Further, for PC-PAS, $\mathrm{dSNR} = \SI{14.5}{dB}$ and $\kappa = \SI{-3.9}{dB}$.}
    \label{fig:mlhy_fer_im4}
\end{figure}

Fig.~\ref{fig:mlhy_fer_im4} depicts the \glspl{fer} for a $4$-PAM constellation with $N=64$ and $R=\SI{1.25}{\bpcu}$. We show shaped \gls{mlhy} coding and \gls{pcpas}, uniform \gls{mlpc}, and the \gls{rcub} for \gls{pam} over the \gls{awgn} channel.
The additional shaping gain of using the rate-optimal, asymmetric $P_X$ over a symmetric distribution $\tilde P_X$ is approximately \SI{0.34}{\dB}.
Without \gls{crc}, the \gls{mlhy} curve exhibits the predicted shaping gain and outperforms \gls{pcpas} without \gls{crc}. It further lies on top of the \gls{rcub}.
With list pruning by \gls{crc} and type checking, \gls{pcpas} gains approximately $\SI{0.2}{dB}$. %
First results using an additional outer \gls{crc} code in the \gls{mlhy} scheme did not provide a noticeable coding gain.
We therefore did not include the \gls{crc} curves in this case. We expect to recover the full shaping gain by further optimizing the polar and \gls{crc} codes.

Recall that \gls{mlhy} coding uses the same binary polar multistage decoder at the transmitter and the receiver.
The implementation complexity is thus reduced as compared to \gls{pas}.
Furthermore, the use of \gls{ccdm} as an outer code causes the end-to-end \gls{ber} of \gls{pas} to typically be much higher than for \gls{mlhy} coding for the same \gls{fer}.

\section{Conclusion}

We showed that multilevel polar coded modulation with binary polar codes and Honda-Yamamoto probabilistic shaping can achieve the capacity of \glspl{dmc} with input alphabets of cardinality a power of two. 
The performance is on-par with state-of-the-art \gls{pcpas} for short and moderate block lengths.
Future research may further optimize the code for these block lengths, and investigate how the constraints induced by the deterministic \gls{dm} process can be used to aid decoding.

\section*{Acknowledgment}
The authors wish to thank Prof. Gerhard Kramer for suggestions.
This work was supported in part by the German Federal Ministry of Education and Research (BMBF) under the Grant 6G-life, and by the German Research Foundation (DFG) under Grant KR 3517/9-1.

\clearpage
\bibliographystyle{IEEEtran}
\bibliography{IEEEabrv,confs-jrnls,manual}

\end{document}